\DeclareMathOperator*{\arginf}{arg\,inf}
\newtheorem{theorem}{Theorem}[section]
\newtheorem{lemma}[theorem]{Lemma}
\newtheorem{remark}{Remark}
\newtheorem{definition}{Definition}
\newcommand\oprocendsymbol{\hbox{$\blacksquare$}}
\newcommand\oprocend{\relax\ifmmode\else\unskip\hfill\fi\oprocendsymbol}
\title{\LARGE \bf Protective Mission against a Highly Maneuverable Rogue Drone Using Defense Margin Strategy}
\author{Minjun Sung, Christophe J. Hiltebrandt-McIntosh, Hunmin Kim and Naira Hovakimyan
\thanks{This work has been supported by the National Science Foundation  (CNS-1932529), NASA  (NNH20ZEA001N-ULI, NNH21ZEA001N-USRC), and AFOSR.}%
\thanks{Minjun Sung, Christophe J. Hiltebrandt-McIntosh, and Naira Hovakimyan are with the Department of Mechanical Science and Engineering, University of Illinois at Urbana-Champaign, USA.
{\tt\small  \{mjsung2, cjh11, nhovakim\}@illinois.edu}}%
\thanks{Hunmin Kim is with the Department of Electrical and Computer Engineering, Mercer University, USA.
{\tt\small  kim$\_$h@mercer.edu }}
}
\begin{document}
\maketitle

\begin{abstract}
The current paper studies a protective mission to defend a domain called the \textit{safe zone} from a rogue drone invasion. We consider a one attacker and one defender drone scenario where only a noisy observation of the attacker at every time step is accessible to the defender. Directly applying strategies used in existing problems such as pursuit-evasion games are shown to be insufficient for this mission. We introduce a new concept of \textit{defense margin} to complement an existing strategy and construct a control strategy that successfully solves our problem. We provide analysis to point out the limitations of the existing strategy and how our defense margin strategy can enhance the performance. Simulation results show that our strategy outperforms that of the existing strategy at least by $36.0$ percentage points  in terms of protective mission success.
\end{abstract}

\section{Introduction}

Along with the rapid growth of the drone market, case reports and consequential concerns about malicious drones have been increasing. Some of the rogue drone incidents include flight interruption, terror attacks, privacy intrusion, and many more \cite{loeb2017exclusive,singh2018eye}. From these incidents, it becomes clear that we need to develop measures to defend property, land, or any valuable assets from unauthorized aerial invasions.  

Fortunately, there have been numerous types of research on tracking and defending moving aerial vehicles. In particular, various detection methods utilizing radio signals, radar, video, audio information, and combinations of these were investigated\cite{taha2019machine,guvenc2018detection}. Active capturing methods using net guns and birds have been proposed in \cite{guvenc2018detection}. Jamming methods using Electromagnetic Pulse (EMP) have been explained in\cite{shi2018anti}.    

Jamming is an effective neutralization scheme. However, since EMP can cause unintended impacts, most countries prohibit the use of jamming devices at the consumer level\cite{park2021survey}. As a result, we need to rely on aerial capturing methods where we need to shoot net bullets from the defender drones to neutralize the malicious drone\cite{park2021survey}. To do so, the defender drone must steer close enough to the rogue drone.

In constructing a problem to effectively steer the defender while protecting the \textit{safe zone}, we assume the followings: 1) The objective of the mission accounts for a designated area that we intend to protect. 2) The observation of a rogue drone is noisy. 3) The rogue drone is highly maneuverable and its trajectory or dynamics are not known in advance.  

A mission to track and defend a drone most resembles the close-in jamming problem introduced in\cite{valianti2021multi}. This work first addressed the problem to jam a rogue drone with observational uncertainty. However, it has limited the number of possible control actions for agents. More importantly, it differs from our problem in that it does not assume a protective mission and it chiefly focuses on jamming intensity. In our work, we instead assume a protective mission with an aerial capturing scenario where the defender drones have to approach the attacker drone more closely. Other relevant fields of research include differential Pursuit and Evasion (PE) games\cite{isaacs1999differential}, Perimeter Defense (PD) games\cite{shishika2020review}, and variations of these games. In PE games, a pursuer tries to intercept an evader while an evader tries to avoid a pursuer. This problem has been extensively studied including one pursuer-one evader problem\cite{ho1965differential,meier1969new,leitmann1968simple}, multi-agent problem\cite{hagedorn1976differential,fuchs2010cooperative, liu2013evasion, katz2005solution}, and under observational uncertainty\cite{shah2019multi,basimanebotlhe2014stochastic,pachter1983one,yavin1986pursuit}. This class of problems intends to find conditions for which the pursuer or evader can guarantee its victory, without considering any defense objective. A subclass of PE problem is Target-Attacker-Defender (TAD) game where an evader (attacker) additionally tries to reach the target while evading the pursuer (defender). TAD game assumes limited maneuverability of an attacker or knowledge of an attacker's dynamical model\cite{coon2017control,garcia2018optimal}. This is because the TAD game was initially motivated by military missions where such assumptions are reasonable. PD game, on the other hand, limits the defender to only move along the perimeter of the target during its mission, and it is without noise in the observation.

Our main contributions for this work are:
\begin{enumerate}
    \item Novel problem formulation assuming highly maneuverable agents, noisy observation, and the \textit{safe zone}. 
    \item Providing a new metric to quantify defense performance in a protective scenario. 
    \item Designing a defense strategy based on the new metric and proving its efficacy analytically and empirically.
\end{enumerate}

\subsection{Notations used in this work}
In this work we denote Euclidean norm as  $\|\cdot\|$, expectation of a random variable as $\mathbb{E[\cdot]}$. For geometric analysis in Section~\ref{sec:met}, we used $\overline{z_a z_b}$ to denote line segment that connects the two endpoints of the vectors $z_a$ and $z_b$. Moreover, $z_a \perp z_b$ and $z_a\parallel z_b$ respectively tell that the vectors are perpendicular and parallel to one another. 
{\renewcommand{\arraystretch}{1.2}
\begin{table}[ht]
\caption{Notations} \label{table:notations}
    \small
    \begin{tabularx}{\linewidth}{p{0.05\textwidth}X}
    \hline
    \hline
    $t$ & Time \\
    $\Omega_I$ & Zone of interest  \\
    $\Omega_S$ & Safe zone  \\ 
    $R_{\Omega_I}$ & Radius of $\Omega_I$\\
    $R_{\Omega_S}$ & Radius of $\Omega_S$\\
    $x^{a}_t$, $x^{d}_t$ & Attacker ($a$) and defender ($d$) state vector $\in \mathbb{R}^2$\\
    $e_t$ & Error vector $x^a_t - x^d_t$\\
    $u^{a}_t$, $u^{d}_t$ & Control input for attacker and defender\\
    $w_t$ & Measurement noise\\
    $y_t$ & Noisy observation of an attacker $x^a_t+w_t$\\
    $\sigma_t$ & Standard deviation of a measurement noise\\
    $\tau$ & Maximum capturing distance\\
    $\rho_{x^{a}_t}$ & Defense margin\\
    $\lambda_t$ & Weight parameters for defender control \\
    $P_t$ & Observational reliability\\
    \hline
    \hline
    \end{tabularx}
\end{table}}

\section{Problem formulation}

We consider a protective mission of a single defending drone, called the defender, against a single attacking drone, called the attacker. The objective of the defender is to prevent an attacker from invading the \textit{safe zone}. This paper aims to provide an effective defender strategy that can be implemented against a highly maneuverable attacker with an unknown trajectory and observational uncertainty. One defender and one attacker scenario can be considered as the smallest module which can be directly extended to multi-agent scenarios as in\cite{pierson2016intercepting,huang2011guaranteed}. 

\subsection{State-space representation} \label{sec: I}
The mission is assumed to be held in $\mathbb{R}^2$ space. The zone of interest $\Omega_I\subset \mathbb{R}^2$ is defined to be the region where observation of a drone in this area is considered to have a rogue intent. The safe zone $\Omega_S\subset \Omega_I$ is defined to be the domain that encompasses what the defender wishes to defend. The attacker wins the mission if it reaches $\Omega_S$ before getting intercepted by the defender. In this work, we assume $\Omega_I$ and $\Omega_S$ represent circles with respective radii $R_{\Omega_I}$ and $R_{\Omega_S}$, and the origin be the center of both circles.  

The attacker $a$ and the defender $d$ configuration at time $t$ are expressed as $x^{i}_t \in \mathbb{R}^2$ for $i \in \{a,d\}$. The configuration represents the planar position in Cartesian coordinates. Discrete-time dynamics of the attacker and the defender can be respectively written as:
\begin{equation}
    x^{i}_{t+1} = x^{i}_t + u^{i}_t, \quad i \in \{a,d\}.
\end{equation}
Here $u^{i}_t\in \mathbb{U}^{i}_t \subset \mathbb{R}^2$ for $i \in \{a,d\}$ is a deterministic control input of each agent, and $u^a_t$ is unknown to the defender at all times. Moreover, $\mathbb{U}^{a}_t$ and $\mathbb{U}^{d}_t$ denotes a set of admissible controls of the attacker and the defender at time $t$. 

In practice, $u^i_t$ for $i \in \{a,d\}$ can be considered as the \textit{speed} of agents, and they directly control the respective dynamics. In other words, we use a single integrator model in a discretized form. In this paper, we assume $\mathbb{U}^{d}_t=\mathbb{U}^{a}_t = \{u\in \mathbb{R}^2: \|u\| \leq 1\}$. Being able to instantly change the speed at any time, this condition accounts for the high maneuverability of drones. Moreover, $\|u^{i}_t\| \leq 1$ for $i \in \{a,d\}$ is for its normalization to respective maximum values. This is a relaxed assumption used in a handful of papers\cite{shah2019multi,garcia2019cooperative}, while others assume the defender to outpace the attacker\cite{makkapati2019optimal,von2020robust}. 

The attacker is considered to be \textit{intercepted} or \textit{captured} by the defender if the distance $\|e_t\| \triangleq \|x^a_t - x^d_t\|$ between the attacker and the defender is closer than the maximum capturing distance $\tau$. Formally, the attacker is intercepted if $\|e_t\| \leq \tau$. Choices of net guns characterizes the the maximum capturing distance $\tau$.

\subsection{Attacker detection model} 
In PE games with uncertainty, various models including Brownian motion model\cite{basimanebotlhe2014stochastic,yavin1986pursuit} and ellipsoid model\cite{shah2019multi} have been considered. In this work, we follow the uncertainty model used in\cite{davis2016c} such that we receive independent noisy state observation of the attacker at every time step.

An observation of $x^a_t$ at time $t$ is denoted as $y_t \in \mathbb{R}^2$, and is subject to a zero-mean Gaussian noise with covariance matrix $\sigma^2_t I_2$, where $\sigma ^2_t \in \mathbb{R}_{\geq0}$ represents a variance of a Gaussian distribution, and $I_2\in \mathbb{R}^{2\times 2}$ represents an identity matrix \cite{morbidi2012active},\cite{valianti2021multi}. Formally, the following model is adopted to express the observational uncertainty:
\begin{equation}\label{eq: distribution}
\begin{aligned}
    y_t &= x^a_t + w_t \\
    w_t &\sim \mathcal{N} (0,\sigma^2_t I_2).
\end{aligned}
\end{equation}

Lastly, $\sigma_t$ is modeled by adopting the uncertainty model proposed in\cite{davis2016c}:
\begin{equation}\label{eq: sigma}
    \sigma^2_t = \beta_{b}+ \beta_{d} \|e_t\|^2+ \beta_{v}  (1-\nu_t).\\
\end{equation}
Parameters $\beta_b, \beta_d,\beta_v$ are non-negative real values characterized by the sensor and the estimation model. Specifically, they represent the variance coefficient for baseline, distance, and visibility, respectively. Visibility $\nu_t\in [0,1]$ relates the blockage of the sight to the variance of the uncertainty, such that $\nu_t =0$ if the sight is fully blocked by an obstacle, and $\nu_t=1$ if the sight is not blocked at all. Any values between represent partial blockage of the sight.

In this paper, we will consider an environment without any obstacles such that $\nu_t \equiv 1$. Furthermore, we will consider zero baseline variance or $\beta_b = 0$ implying that the observational uncertainty becomes zero when the distance $\|e_t\|$ is zero. Then, we can rewrite \eqref{eq: distribution} as
\begin{equation}\label{uncertainty}
\begin{aligned}
    y_t &= x^{a}_t +w_t\\ 
    w_t \sim \mathcal{N}& (0,\beta \|e_t\|^2 I_2),
\end{aligned}
\end{equation}
where $\beta$ is a short hand notation for $\beta_d$.

\subsection{Joint tracking and defending problem}

Now we formally state our problem in this section. The defender's mission is to prevent the attacker from landing at the safe zone $\Omega_S$ for all time \textit{or} to intercept the attacker before it reaches the safe zone. Precisely, the problem is to find discrete control input $u^{d}_t$ such that it satisfies
\begin{equation} \label{condition1}
\begin{aligned}
    x^{a}_t \notin \Omega_S &\: \forall t \in [t_i,t_f]\\
    &\text{ Or }\\
    \exists t_c\in [t_i,t_f]:  (x^{a}_t \notin \Omega_S \: &\forall t \in [t_i,t_c])\wedge  (\|e_{t_c}\| \leq \tau)
\end{aligned}
\end{equation}
subject to
\begin{equation}\label{condition1-subject}
\begin{aligned}
    \|u^{d}_t\| \leq 1, \|&u^{a}_t\| \leq 1 \quad \forall t \in [t_i,t_f]\\
    y_t &= x^{a}_t +w_t\\
    w_t \sim \mathcal{N}& (0,\beta \|e_t\|^2 I_2)
\end{aligned}
\end{equation}
where $t_i$, $t_f$, $t_c$ respectively denote the initial time of observation, terminal time that can be chosen by the user, and the capturing time. Note that this problem is not limited to the interception problem, but defines a more general class of a defense problem. The defender can win also by not letting the attacker pass through for a sufficiently long \textit{runtime}. Fig~\ref{fig:Figure1} visualizes the problem. 

\begin{figure}[ht]
    \centering
    \includegraphics[width=\linewidth]{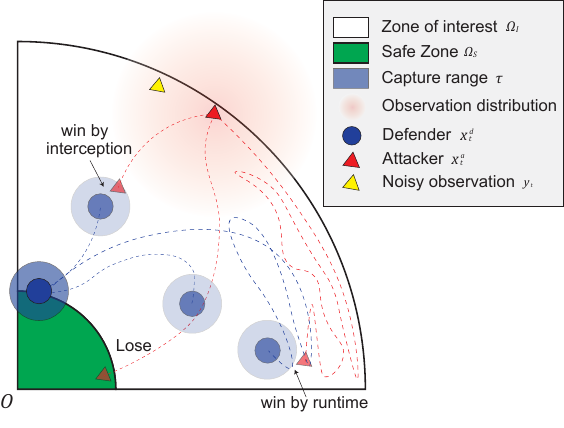}
    \caption{Problem visualization}
    \label{fig:Figure1}
\end{figure}

\section{Method}

Our solution to the joint tracking and defending problem is partly motivated by the properties of the Pure Pursuit  (PP) strategy, which is a widely adopted guidance law for interception missions. We will formally introduce and explain the advantages and limitations of the PP strategy in Section~\ref{sec:PP} along with other popular guidance laws. In Section~\ref{sec:met} we introduce a strategy based on \textit{defense margin} which can complement the PP strategy. Then, in Section~\ref{sec:TV}, we propose a strategy that combines the two strategies to effectively solve our problem.

\subsection{Baseline: Pure Pursuit strategy}\label{sec:PP}
Typical and popular strategies utilized in PE games are Constant Bearing (CB), Line of Sight (LoS), and PP guidance laws~\cite{breivik2008guidance}. CB assumes the knowledge of the attacker's instantaneous velocity as well as its position~\cite{makkapati2018pursuit}, whereas the defender only has access to the noisy observation of the attacker in our problem. Consequently, CB is not suitable for application to our problem. LoS, on the other hand, is known to be infeasible in missions with observational uncertainty unless there are external or additional measures to complement the noisy observation\cite{ratnoo2011line}. 

Having only access to the instantaneous positional estimate of the attacker, the PP strategy is a reasonable strategy to be considered\cite{makkapati2018pursuit}. The idea of this strategy is to always steer the defender directly to the observation of the attacker \eqref{condition1}. Formally, the defender's control input is designed by
\begin{equation} \label{pp_control}
    u^{d}_t = \frac{y_t-x^{d}_t}{\|y_t-x^{d}_t\|},
\end{equation} 
where $u^{d}_t$ is normalized to meet the constraint \eqref{condition1-subject}.

In this work, we show that the PP strategy is effective, but for limited conditions due to the presence of uncertainty. Here we explain such conditions analytically.

\begin{definition}
Consider the \textit{n}-dimensional stochastic discrete time system
\begin{equation}\label{eq: dsct}
    \zeta_{t+1} = f (\zeta_t,\chi_t,\chi), \quad \zeta (t_0) = \zeta_0 
\end{equation} The trivial solution of the system is said to be stochastically stable or stable in probability, if  for every $\epsilon > 0$ and $h>0$ there exists $\delta = \delta (\epsilon,h,t_0)>0 $ such that 
    \begin{equation}
        P\{|\zeta_t|<h\}\geq 1-\epsilon, \quad t\geq t_0
    \end{equation}
when $|\zeta_0|<\delta$. Otherwise, it is said to be stochastically unstable\cite{li2013stability}.
\end{definition}

Consider the Lyapunov function $V: \mathbb{R}^n \rightarrow \mathbb{R}$, with $V (0)=0$. Its discrete increment it is expressed as follows: 
\begin{equation} \label{deltalyapunov}
    \Delta V (\zeta_t) = V (\zeta_{t+1})-V (\zeta_t)
\end{equation}
Using this definition and notation of discrete Lyapunov function and its increment, the following theorems are derived:
\begin{theorem}
If there exists a positive definite function $V (\zeta_t)\in C^2 (D_r)$, such that 
\begin{equation}\label{lyapunouv_theorem}
E[\Delta V (\zeta_t)] \leq 0    
\end{equation}
for all $\zeta_t \in D_r$, then the trivial solution of \eqref{eq: dsct} is stochastically stable in probability\cite{li2013stability}.
\end{theorem}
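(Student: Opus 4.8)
The plan is to read the hypothesis $E[\Delta V(\zeta_t)] \le 0$ as a supermartingale condition on $V(\zeta_t)$ and then use a maximal inequality to control the probability that the trajectory leaves a ball. First I would extract two quantitative consequences of positive definiteness and continuity of $V$. Fix $\epsilon>0$ and $h>0$, and without loss of generality take $h\le r$ so that the ball $\{|\zeta|<h\}$ lies inside $D_r$. Because $V$ is continuous and positive definite, the infimum
\[
\varphi_h = \inf_{h \le |\zeta| \le r} V(\zeta)
\]
is attained on a compact set and is strictly positive. Separately, using $V(0)=0$ and continuity of $V$ at the origin, I would choose $\delta>0$ small enough that $|\zeta_0|<\delta$ forces $V(\zeta_0) < \epsilon\,\varphi_h$. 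This $\delta$ depends only on $\epsilon$, $h$, and $t_0$, exactly as the definition of stochastic stability requires.

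Next I would introduce the first-exit time $\theta = \inf\{t \ge t_0 : |\zeta_t| \ge h\}$ and the stopped process $V(\zeta_{t\wedge\theta})$. The increment hypothesis is equivalent to $E[V(\zeta_{t+1})\mid \mathcal{F}_t] \le V(\zeta_t)$ while the trajectory stays in $D_r$, so the stopped process is a nonnegative supermartingale: stopping freezes it once the boundary is reached, and strictly before $\theta$ the one-step inequality applies because $\zeta_t \in D_r$.

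Then I would apply the maximal inequality for nonnegative supermartingales (the discrete Doob--Kolmogorov inequality), which gives
\[
P\!\left\{ \max_{t_0 \le s \le t} V(\zeta_{s\wedge\theta}) \ge \varphi_h \right\} \;\le\; \frac{E[V(\zeta_{t_0\wedge\theta})]}{\varphi_h} \;=\; \frac{V(\zeta_0)}{\varphi_h}.
\]
The step that links this back to the state is the observation that the event $\{\theta \le t\}$ — that $|\zeta_s|\ge h$ for some $s\in[t_0,t]$ — implies $V(\zeta_\theta) \ge \varphi_h$ by the definition of $\varphi_h$, hence is contained in the event on the left. Combining this containment with the bound and the choice of $\delta$ yields $P\{\theta \le t\} \le V(\zeta_0)/\varphi_h < \epsilon$ for every $t \ge t_0$. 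Passing to the complement gives $P\{|\zeta_s|<h \text{ for all } s\in[t_0,t]\} \ge 1-\epsilon$, and in particular $P\{|\zeta_t|<h\} \ge 1-\epsilon$ for all $t\ge t_0$, which is the desired conclusion.

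I expect the main obstacle to be the careful treatment of the stopping time together with the exit overshoot. Since the Lyapunov inequality is assumed only inside $D_r$, the process must genuinely be stopped at $\theta$ for the supermartingale property to hold, and because a single step may carry $\zeta$ strictly past the sphere $|\zeta|=h$, the threshold in the maximal inequality must be the infimum of $V$ over the full annulus $h \le |\zeta| \le r$ rather than over $|\zeta|=h$ alone. Verifying that $\theta \le t$ really forces $V(\zeta_\theta)\ge\varphi_h$, and that the stopped process satisfies the integrability and measurability conditions implicit in $V\in C^2(D_r)$ so that Doob's inequality is legitimately applicable, is where the argument demands the most rigor.
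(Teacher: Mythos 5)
The paper itself gives no proof of this theorem: it is quoted verbatim from the cited reference on stochastic stability of discrete-time systems, and the authors use it only as a black box to justify the Lyapunov analysis in Theorem~3.2. So there is no in-paper argument to compare against; what you have written is the standard supermartingale proof of such results, and it is essentially the one found in the stability literature the paper cites. Your structure is right: positive definiteness and continuity give $\varphi_h=\inf_{h\le|\zeta|\le r}V(\zeta)>0$ and a $\delta$ with $V(\zeta_0)<\epsilon\varphi_h$; the stopped process $V(\zeta_{t\wedge\theta})$ is a nonnegative supermartingale because the decrease condition holds throughout $D_r$; Doob's maximal inequality bounds the exit probability by $V(\zeta_0)/\varphi_h<\epsilon$; and you correctly observe that the threshold must be the infimum over the whole annulus, not just the sphere $|\zeta|=h$, because a discrete step can overshoot.

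Two points deserve more care than you give them. First, the hypothesis as written is an unconditional expectation $E[\Delta V(\zeta_t)]\le 0$; your argument needs the conditional version $E[V(\zeta_{t+1})\mid\mathcal{F}_t]\le V(\zeta_t)$ on $\{t<\theta\}$. That is the intended reading of such Lyapunov conditions (and is how the paper uses it), but you should state explicitly that you are interpreting the hypothesis as a drift condition pointwise in $\zeta_t$, since the unconditional inequality alone does not yield a supermartingale. Second, your annulus fix handles overshoot past $|\zeta|=h$ but not overshoot past $|\zeta|=r$: if a single step carries the state from inside $B_h$ to outside $D_r$, then $V(\zeta_\theta)$ is undefined ($V\in C^2(D_r)$ only) and the containment $\{\theta\le t\}\subseteq\{\max_s V(\zeta_{s\wedge\theta})\ge\varphi_h\}$ breaks down. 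This is usually repaired by assuming bounded one-step increments (which holds in the paper's application, where the increment of $e_t$ is bounded by the two unit-norm controls), by extending $V$ continuously and positively outside $D_r$, or by stopping at the exit from a sublevel set $\{V<\varphi_h\}$ contained in $B_h$. With either of those patches your proof is complete and correct.
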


If we consider $\zeta_t$ in \eqref{lyapunouv_theorem} to be $e_t$, we can interpret the stochastic stability of $e_t$ as the expected defender state converging to that of the attacker, meaning interception. In the following, we provide the condition that guarantees such convergence when using the PP strategy.

\begin{theorem}\label{theorem:main1}
Assume $\|e_t\|>\sqrt{2}$ and $\|w_t\| < \|e_t\|$. The error $e_t$ is stable in probability under the PP strategy \eqref{pp_control} if the following condition holds: 
\begin{equation}\label{maintheorem1}
    \frac{e_t^\top u^{a}_t+1}{\|e_t+u^a_t\|} \leq \mathbb{E}[\cos \alpha],
\end{equation}
where $$\cos \alpha \triangleq \frac{ (e_t+w_t)^\top (e_t+u^{a}_t)}{\|e_t+w_t\|\: \|e_t+u^{a}_t\|}, \: \alpha \in  (-\frac{\pi}{2}, \frac{\pi}{2}).$$

\end{theorem}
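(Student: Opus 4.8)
The plan is to apply the preceding stochastic Lyapunov theorem with the quadratic candidate $V(e_t)=\|e_t\|^2=e_t^\top e_t$, which is positive definite, vanishes at the origin, and is $C^2$ on all of $\mathbb{R}^2$, so the only thing left to verify is the sign of the expected increment $\mathbb{E}[\Delta V(e_t)]$. First I would write down the closed-loop error dynamics. Subtracting the defender update \eqref{as} from the attacker update gives $e_{t+1}=e_t+u^a_t-u^d_t$, and since $y_t-x^d_t=(x^a_t+w_t)-x^d_t=e_t+w_t$, the PP law \eqref{pp_control} reads $u^d_t=(e_t+w_t)/\|e_t+w_t\|$, a unit vector. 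Note that at time $t$ the quantities $e_t$ and the attacker action $u^a_t$ are deterministic, so the only randomness entering $\Delta V$ is the noise $w_t$ acting through $u^d_t$.

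Next I would expand the increment. Writing $a\triangleq e_t+u^a_t$ and using $\|u^d_t\|=1$,
\[
\Delta V(e_t)=\|a-u^d_t\|^2-\|e_t\|^2 = 2e_t^\top u^a_t+\|u^a_t\|^2+1-2\,a^\top u^d_t .
\]
The crucial substitution is to recognize the projection term as the defined geometric quantity: $a^\top u^d_t=(e_t+u^a_t)^\top\frac{e_t+w_t}{\|e_t+w_t\|}=\|e_t+u^a_t\|\cos\alpha$, directly from the definition of $\cos\alpha$ in the statement. Taking the attacker at maximum speed $\|u^a_t\|=1$ (the worst case for the defender) collapses $\|u^a_t\|^2+1$ to $2$, and taking expectation over $w_t$ leaves $\cos\alpha$ as the only random factor, giving $\mathbb{E}[\Delta V]=2\big(e_t^\top u^a_t+1-\|e_t+u^a_t\|\,\mathbb{E}[\cos\alpha]\big)$. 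Dividing the requirement $\mathbb{E}[\Delta V]\le 0$ by $2\|e_t+u^a_t\|>0$ yields exactly \eqref{maintheorem1}, and the Lyapunov theorem then delivers stochastic stability in probability.

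The remaining work is to justify that every object above is well defined and that the theorem is applicable, which is where the standing assumptions enter. The hypothesis $\|w_t\|<\|e_t\|$ keeps $e_t+w_t\neq 0$, so that $u^d_t$ and $\cos\alpha$ are well defined, and $\|e_t\|>\sqrt{2}>\|u^a_t\|$ keeps $e_t+u^a_t\neq 0$, so that the denominator in \eqref{maintheorem1} is strictly positive; the same bound forces $e_t+u^a_t$ to lie within $45^\circ$ of $e_t$, which is what lets the geometric angle $\alpha$ be taken on its principal branch $(-\tfrac{\pi}{2},\tfrac{\pi}{2})$ as declared. I expect the main obstacle to be this bookkeeping around $\cos\alpha$ rather than the algebra: because $\cos\alpha$ is a nonlinear functional of the Gaussian $w_t$ whose covariance $\beta\|e_t\|^2 I_2$ itself grows with the separation, one must argue carefully that the expectation is taken on the correct branch and that the sign conventions keep the final division monotone, so that $\mathbb{E}[\Delta V]\le 0$ is genuinely equivalent to \eqref{maintheorem1}.
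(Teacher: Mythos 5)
Your proposal is correct and follows essentially the same route as the paper: the quadratic Lyapunov candidate $V(e_t)=e_t^\top e_t$, substitution of the noisy PP law $u^d_t=(e_t+w_t)/\|e_t+w_t\|$, identification of the cross term with $\|e_t+u^a_t\|\cos\alpha$, the bound $\|u^a_t\|^2\le 1$, and rearrangement of $\mathbb{E}[\Delta V]\le 0$ into \eqref{maintheorem1}. The only cosmetic difference is that you set $\|u^a_t\|=1$ as a worst case and write an equality where the paper keeps the inequality $u^{a\top}_t u^a_t\le 1$; the sufficiency argument is unaffected.
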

\begin{proof}

\begin{figure}[h]
    \centering
    \includegraphics[width=0.6\linewidth]{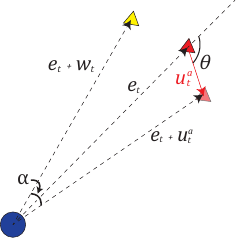}
    \caption{Visualization of notations and notions used in Theorem~\ref{theorem:main1}}
    \label{fig:Figure2}
\end{figure}

We use the Lyapunov function to provide a condition under which stability can be guaranteed.

Define a Lyapunov function $V (e_t)$ as a dot product of $e_t$ to itself:
\begin{align}\label{eq:Lyap}
    V (e_t) = e^\top _t e_t.
\end{align}

By construction, $V (e_t)$ is positive definite, and $V (0)=0$. Plugging the uncertain observation model \eqref{uncertainty} into the PP control \eqref{pp_control} yields:

\begin{equation}\label{eq:input}
\begin{aligned}
    u^{d}_t &= \frac{y_t-x^{d}_t}{\|y_t-x^{d}_t\|} = \frac{e_t + w_t}{\|e_t + w_t\|}
\end{aligned}
\end{equation}

Plugging~\eqref{eq:input} into~\eqref{deltalyapunov}, we have
\begin{equation}\label{lyapunov_expanded}
\begin{aligned}
    \Delta V (e_t) &= V (e_{t+1})-V (e_t)\\
    &=  (e_t - \frac{e_t+w_t}{\|e_t+w_t\|} +u^{a}_t)^\top  (e_t - \frac{e_t+w_t}{\|e_t+w_t\|} +u^{a}_t)\\
    &\quad - e^\top _t e_t\\
    &= -2\frac{e^\top _t (e_t+w_t)}{\|e_t+w_t\|} + 2e^\top _t u^{a}_t + \frac{ (e_t+w_t)^\top  (e_t+w_t)}{\|e_t+w_t\|^2}\\
    &\quad +u^{a\top} _t u^{a}_t - 2\frac{ (e_t+w_t)^\top u^{a}_t}{\|e_t+w_t\|}
\end{aligned}
\end{equation}

Rearranging and taking expectation on both side yields,

\begin{equation}\label{generalinequality}
    \begin{aligned}
    \mathbb{E}[&\Delta V (e_t)] = \\
    &\quad -2\mathbb{E}[\frac{ (e_t+w_t)^\top (e_t+u^{a}_t)}{\|e_t+w_t\|}] + \mathbb{E}[\frac{ (e_t+ w_t)^\top  (e_t +w_t)}{\|e_t+w_t\|^2}]\\
    &\quad +2\mathbb{E}[e^\top _t u^{a}_t] + \mathbb{E}[u^{a\top} _t u^{a}_t]\\
    &\leq -2\mathbb{E}[\frac{\|e_t+w_t\|\;\|e_t+u^{a}_t\|\; \cos \alpha}{\|e_t+w_t\|}]\\
    &\quad + 2e_t^\top u^a_t +2\\
    &= -2\|e_t+u^{a}_t\|\mathbb{E}[\cos \alpha] + 2e_t^\top u^a_t +2
    \end{aligned}
\end{equation}
Here we simply used $\|u^a_t\| \leq 1$. In addition, $\alpha\in  (-\frac{\pi}{2},\frac{\pi}{2})$ due to $\|e_t\|>\sqrt{2}$ and $\|w_t\|<\|e_t\|$. 

Rearranging \eqref{generalinequality} to satisfy \eqref{lyapunouv_theorem}, we obtain \eqref{maintheorem1}, completing the proof.

\end{proof}

\begin{remark}
Stochastic stability of trivial case  ($u^a_t$ being a zero vector) can be directly proved after \eqref{lyapunov_expanded} simply by plugging in $u^a_t$ a zero vector and using $\|e_t\|>\sqrt{2}$ and $\|w_t\|<\|e_t\|$.
\end{remark}

For the PP strategy to be effective, we need \eqref{maintheorem1} to hold. To illustrate this point, implicitly define $\theta$ by 
$$\cos \theta \triangleq \frac{e_t^\top u^{a}_t}{\|e_t\|\cdot\|u^{a}_t\|}, \: \theta \in [-\pi, \pi].$$
Consider $\cos \theta = -1$ and $\|u^{a}_t\| = 1$ such that $e_t^\top u^a_t = -\|e_t\|$, for which the attacker is moving directly towards the defender. Then, $\frac{-\|e_t\|+1}{\|e_t\|-1} =  -1$, which makes \eqref{maintheorem1} to be always true regardless of $\alpha$. On the other hand, one can also find out that large $\cos \alpha$ can be obtained when we have sufficiently small $\|e_t\|$ in addition to $\cos \theta \simeq -1$. That is because small $\|e_t\|$ will yield $e_t+w_t \simeq e_t$ by \eqref{condition1-subject}, and $\cos \theta \simeq -1$ will yield $\kappa e_t \simeq e_t+u^a_t$ where $\kappa\in  (0,1]$ is a constant. This consequently makes $\cos \alpha \simeq 1$ to make the inequality to hold. 

The PP strategy becomes sufficiently effective for interception when $\cos \theta \simeq -1$. Since the defender does not know the precise position of the attacker, small $\|e_t\|$ to induce $w_t \simeq 0$ needs to be satisfied in advance. In other words, the defender would have to behave in a \textit{conservative} manner until small $\|e_t\|$ is achieved and then utilize the PP strategy.

\begin{remark}\label{remark:theta0}
Note that if $\theta = 0$, the attacker is heading directly away from the defender. For $\|u^a_t\| = 1$ the left-hand side of \eqref{maintheorem1} becomes $1$. The inequality does not hold \textit{almost surely}. This agrees with our intuition that if the attacker is moving away from the defender, the best pursuit a defender can do is to keep $\|e_t\|$ constant, as long as the maximum speed of a defender and an attacker are equivalent. 
\end{remark}

\subsection{Complement: Defense Margin Strategy}\label{sec:met}

The limited reliability of the PP guidance law makes it insufficient to be applied to our mission. In particular, the goal of the PP strategy corresponds only to the second objective in \eqref{condition1}. We intend to design a strategy that accounts for both. In this subsection, we explain a safe reachable set and apply this to suggest a new metric \textit{defense margin} that measures a defense performance at each state. Then, we introduce a Defense Margin strategy (DM strategy) and explain how this can complement the PP strategy.

\begin{definition}
The safe reachable set $L_{x^a_t}$ is the set of positions reachable by the attacker before the defender\cite{shah2019multi}. 
\end{definition}
Following the assumption in \eqref{condition1-subject} that the defender is at least as fast as the attacker, we can express $L_{x^a_t}$ as follows:
\begin{equation}
    L_{x^a_t} = \{l\in \mathbb{R}^2|\ \|l-x^a_t\|\leq \|l-x^{d}_t\| \}. \label{srs}
\end{equation}

Geometrically, the safe reachable set is the half-plane, points which are closer to the attacker than the defender.

Having \eqref{srs}, we can subsequently define $l_{x^a_t} \in L_{x^a_t}$ as the closest point in the reachable set to the safe zone:
\begin{equation}\label{lxat}
    l_{x^a_t} = \arginf_{l \in L_{x^a_t}} \|\Omega_S-l\|,
\end{equation}
where $\|\Omega_S-l\| \triangleq \inf_{\omega_S \in \Omega_S}\|\omega_S - l\|$, for a given $l$.

Finally, we can define a new metric \textit{defense margin}.
\begin{definition}
Defense margin $\rho_{x^a_t}$ is the norm of $l_{x^a_t}$: 
\begin{equation}
    \rho_{x^a_t} = \|l_{x^a_t}\|.
\end{equation}
\end{definition}

Note that if $\rho_{x^a_t}\leq R_{\Omega_S}$, there exists a strategy for the attacker to reach the $\Omega_S$ regardless of the defender's strategy. 

Let
\begin{equation} \label{new_control}
    u^{d}_t = \frac{l_{y}-x^{d}_t}{\|l_{y}-x^{d}_t\|},
\end{equation}
where $l_y$ is defined by replacing $x^a_t$ with $y_t$ in \eqref{lxat}. We will refer to $ u^{d}_t$ as {\em Defense Margin} strategy.

Intuitively, the DM strategy makes the defender maneuver to the closest point from the safe zone that the attacker can potentially reach. This can be considered as a strategy to implicitly accomplish the first goal of \eqref{condition1} by enforcing the attacker to take a detour to reach the safe zone. 

\begin{lemma}\label{lemma:DM}
The Defense margin $\rho_{x^{a}_t}$ can be measured with the following equation
\begin{equation}
    \rho_{x^{a}_t} = \frac{1}{2}\frac{\|x^{a}_t\|^2-\|x^{d}_t\|^2}{\|x^{a}_t-x^{d}_t\|}.
\end{equation}
\end{lemma}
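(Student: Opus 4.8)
The plan is to exploit the fact that $\Omega_S$ is a disk centered at the origin, which turns the constrained minimization defining $l_{x^a_t}$ into a projection of the origin onto a half-plane. First I would observe that for any $l$ the distance $\|\Omega_S - l\| = \inf_{\omega_S\in\Omega_S}\|\omega_S-l\| = \max\{0,\,\|l\|-R_{\Omega_S}\}$ is a nondecreasing function of $\|l\|$ alone. Consequently, minimizing $\|\Omega_S-l\|$ over the half-plane $L_{x^a_t}$ is equivalent to minimizing $\|l\|$ over $L_{x^a_t}$, so that $l_{x^a_t}$ is simply the point of the (closed) half-plane nearest to the origin, and $\rho_{x^a_t}=\|l_{x^a_t}\|$ is exactly the Euclidean distance from the origin to $L_{x^a_t}$.

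Next I would make the boundary of $L_{x^a_t}$ explicit. Expanding the defining inequality $\|l-x^a_t\|\le\|l-x^d_t\|$ cancels the common $\|l\|^2$ term and yields the linear description $e_t^\top l \ge \tfrac12\bigl(\|x^a_t\|^2-\|x^d_t\|^2\bigr)$, whose boundary is the perpendicular bisector of the segment $\overline{x^a_t x^d_t}$: a line with (unnormalized) normal $e_t = x^a_t-x^d_t$ passing through the midpoint $m=\tfrac12(x^a_t+x^d_t)$. In the regime where the defender is genuinely shielding the safe zone, equivalently when the origin does not lie in $L_{x^a_t}$ (which reduces to $\|x^a_t\|\ge\|x^d_t\|$), the nearest point of the closed half-plane to the origin lies on this bisector, namely at the foot of the perpendicular dropped from the origin.

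Finally I would evaluate that distance. Since the nearest point sits on a line with normal $e_t$, its norm equals the absolute value of the projection of any point of the line (take $m$, which satisfies $e_t^\top m = \tfrac12(\|x^a_t\|^2-\|x^d_t\|^2)$) onto the unit normal $\hat e_t=e_t/\|e_t\|$, giving $\rho_{x^a_t}=|m^\top\hat e_t|$. Invoking the identity $(x^a_t+x^d_t)^\top(x^a_t-x^d_t)=\|x^a_t\|^2-\|x^d_t\|^2$ then collapses this to $\rho_{x^a_t}=\tfrac12\bigl(\|x^a_t\|^2-\|x^d_t\|^2\bigr)/\|x^a_t-x^d_t\|$, which is the claimed formula, the sign condition $\|x^a_t\|\ge\|x^d_t\|$ making the numerator nonnegative.

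The step I expect to be the main obstacle is the reduction in the first paragraph together with its companion sign condition in the second: one must argue that the $\arginf$ against the disk $\Omega_S$ genuinely lands on the perpendicular bisector rather than in the interior of the half-plane, and this holds precisely when the origin lies outside $L_{x^a_t}$, i.e.\ in the interesting, defender-favorable case. Once this is settled, the remaining manipulations are the routine midpoint computation and the polarization-type identity above.
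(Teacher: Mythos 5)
Your proposal is correct and follows essentially the same route as the paper: both identify $l_{x^a_t}$ as the foot of the perpendicular from the origin to the bisector of $\overline{x^a_t x^d_t}$ and extract the formula from the identity $(x^a_t+x^d_t)^\top(x^a_t-x^d_t)=\|x^a_t\|^2-\|x^d_t\|^2$. You are somewhat more careful than the paper, which silently replaces ``closest to $\Omega_S$'' by ``closest to the origin'' and omits the sign condition $\|x^a_t\|\geq\|x^d_t\|$ needed for the minimizer to land on the bisector; making those two points explicit, as you do, is a genuine (if minor) improvement.
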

\begin{proof}

Recall that $l_{x^{a}_t}$ is a vector in the half-plane $L_{x^{a}_t}$ that has a minimum distance to the origin. Moreover, $ (\frac{x^{d}_t+x^{a}_t}{2}-l_{x^{a}_t}) \perp  (x^{a}_t-x^{d}_t)$ or equivalently, 
\begin{equation}
     (\frac{x^{d}_t+x^{a}_t}{2}-l_{x^{a}_t})\cdot  (x^{a}_t-x^{d}_t) = 0
\end{equation}
which yields
\begin{equation}
     (\frac{x^{d}_t+x^{a}_t}{2})\cdot  (x^{a}_t-x^{d}_t) = l_{x^{a}_t}\cdot  (x^{a}_t-x^{d}_t).
\end{equation}
Equivalently,
\begin{equation}
    \frac{\|x^{a}_t\|^2-\|x^{d}_t\|^2}{2} = \|l_{x^{a}_t}\| \|x^{a}_t-x^{d}_t\|
\end{equation}
where the right hand side holds since $l_{x^{a}_t} \parallel x^{a}_t-x^{d}_t$.
Solving for $\|l_{x^a_t}\|$ yields,
\begin{equation}
    \rho_{x^{a}_t} = \|l_{x^a_t}\| = \frac{1}{2}\frac{\|x^{a}_t\|^2-\|x^{d}_t\|^2}{\|x^{a}_t-x^{d}_t\|}.
\end{equation}
\end{proof}

In the following, we provide analytical proof to explain that \eqref{new_control} outperforms \eqref{pp_control} in terms of defense margin, implying that the DM strategy can complement the PP strategy.

\begin{theorem}\label{theorem:main2}
Assume $\|e (t)\|> \sqrt{2}$ and $\|x^a_t\| > \|x^d_t\|$. For static attacker state vector $x^a_{t+1} = x^a_t$ with uncertainty $w_t = 0 \: \forall t$, the following inequality holds for one step change of the defense margin:
\begin{equation}\label{maintheorem2}
\Delta \rho_{x^a_t}|u^d_{DM} \geq \Delta \rho_{x^a_t}|u^d_{PP},
\end{equation}
where $\Delta \rho_{x^a_t}|u^d_{DM}$ and $\Delta \rho_{x^a_t}|u^d_{PP}$ respectively denote $\Delta \rho_{x^a_t}$ following the DM strategy \eqref{new_control} and the PP strategy \eqref{pp_control}.
\end{theorem}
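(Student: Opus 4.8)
The plan is to exploit the two simplifying hypotheses---static attacker and zero noise---so that the observation is exact ($y_t = x^a_t$) and only the defender position $x^d_t$ evolves. By the preceding Lemma the defense margin is then a smooth function of the single variable $x^d_t$, namely $\rho_{x^a_t} = F(x^d_t)$ with $F(x) = \tfrac{1}{2}(\|x^a_t\|^2-\|x\|^2)/\|x^a_t - x\|$, and for either guidance law the one-step change is $\Delta\rho_{x^a_t} = F(x^d_t + u^d_t) - F(x^d_t)$ with $\|u^d_t\| = 1$. The goal is to evaluate this increment for the two laws and compare them.

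First I would treat the PP step. Here $u^d_{PP} = e_t/\|e_t\|$ (since $y_t = x^a_t$), so the defender moves straight toward the attacker and $\|e_{t+1}\| = \|e_t\| - 1$, which is positive because $\|e_t\|>\sqrt{2}$. Substituting $x^d_t + e_t/\|e_t\|$ into $F$ and factoring the numerator $\|x^a_t\|^2-\|x^d_{t+1}\|^2$ against the denominator $\|e_{t+1}\|$ produces a clean cancellation, from which I expect the exact value $\Delta\rho_{x^a_t}|u^d_{PP} = \tfrac{1}{2}$. This constant is the benchmark that the DM step must meet.

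Next I would analyze the DM step. Using $l_{x^a_t} = \rho_{x^a_t}\, e_t/\|e_t\|$ (the foot of the perpendicular from the origin onto the bisector, whose position vector is parallel to $e_t$), I would compute the gradient of the margin with respect to the defender position and establish the identity $\nabla F(x^d_t) = \tfrac{1}{\|e_t\|}\bigl(l_{x^a_t} - x^d_t\bigr)$. This is the crux: it shows that the DM control \eqref{new_control} points exactly along the steepest-ascent direction of the defense margin, whereas the PP control \eqref{pp_control} does not. Consequently the instantaneous rate of increase of $\rho_{x^a_t}$ along DM equals $\|\nabla F(x^d_t)\|$, while along PP it equals $\nabla F(x^d_t)\cdot e_t/\|e_t\| = \tfrac{1}{2}$, and Cauchy--Schwarz yields $\|\nabla F\| \ge \nabla F\cdot e_t/\|e_t\|$, i.e.\ DM dominates PP at first order.

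The main obstacle is upgrading this first-order dominance to the finite, unit-length step demanded by \eqref{maintheorem2}. The denominator $\|e_t\|$ makes $F$ nonlinear, and the PP direction is precisely the direction along which $F$ happens to be exactly affine (rate $\tfrac{1}{2}$), so the comparison is not settled by the gradient alone: one must control the curvature of $t\mapsto F(x^d_t + t\,u^d_{DM})$ over $t\in[0,1]$. I would therefore bound the second-order behavior along the DM ray, leaning on the standing hypotheses $\|e_t\|>\sqrt{2}$ and $\|x^a_t\|>\|x^d_t\|$ (the latter guaranteeing $\rho_{x^a_t}>0$ and keeping the configuration nondegenerate), to certify that the realized DM increment remains at or above the PP value $\tfrac{1}{2}$ and thereby close the gap in \eqref{maintheorem2}.
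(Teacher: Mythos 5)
Your PP computation matches the paper exactly (the paper also obtains $\Delta \rho_{x^a_t}|u^d_{PP} \equiv \tfrac12$ by direct substitution, after a rigid transformation sending $x^d_t \to [0,0]^\top$ and $x^a_t \to [r,0]^\top$), and your gradient identity is a genuinely nice observation that the paper does not make: writing $F(x) = \tfrac12(\|x^a_t\|^2-\|x\|^2)/\|x^a_t-x\|$, one indeed gets $\nabla F(x^d_t) = \tfrac{1}{\|e_t\|}(l_{x^a_t}-x^d_t)$, so the DM control is exactly normalized gradient ascent on the defense margin, and Cauchy--Schwarz gives $\|\nabla F\| \ge \nabla F\cdot e_t/\|e_t\| = \tfrac12$. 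This cleanly explains \emph{why} DM should beat PP and would be a worthwhile remark even in the paper's own development.

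However, as you yourself concede, this only proves the infinitesimal version of \eqref{maintheorem2}, and the step from first-order dominance to the unit-length step is precisely where all the real work lies --- and you have not done it. Two concrete symptoms of the gap: (i) the hypothesis $\|e_t\|>\sqrt{2}$ is never used in your argument, yet it is exactly what the paper's proof needs in its final estimate (the bound $\tfrac{r-\cos\psi}{2\cos\psi\sqrt{(r-\cos\psi)^2+\sin^2\psi}}\ge\tfrac12$ fails without a lower bound on $r$), so any correct finite-step argument must invoke it somewhere; (ii) the curvature could a priori cut against you --- if $t\mapsto F(x^d_t+t\,u^d_{DM})$ were concave on $[0,1]$, the realized increment would be \emph{smaller} than the first-order rate $\|\nabla F\|$, and since $\|\nabla F\|$ can be arbitrarily close to $\tfrac12$ (it equals $\tfrac12$ when $\psi=0$), even a tiny concavity loss would break the inequality. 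So ``bound the second-order behavior'' is not a routine cleanup step; it is the theorem. The paper closes this by brute force: it writes $\rho_{\hat{x}^a_{t+1}}|u^d_{DM}$ explicitly as the point-to-line distance \eqref{rholambda1} in the transformed frame, substitutes $p<\tfrac r2$ and $q=\tfrac r2\tan\psi$, and verifies the resulting closed-form lower bound exceeds $\tfrac12$ plus the old margin for $r>\sqrt2$. You would need to carry out an equivalent explicit estimate (your $F$ along the DM ray is an algebraic expression in $t$, $r$, $\psi$, $p$, $q$, so this is feasible), but as written the proposal stops at the point where the proof actually begins.
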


\begin{proof}
The proof is explained in three blocks: 
\begin{enumerate}
    \item Transform coordinates to simplify the configuration.
    \item Show that the change in defense margin for the PP strategy is precisely $\frac{1}{2}$, or formally $\Delta \rho_{x^a_t}|u^d_{PP} \equiv \frac{1}{2}$.
    \item Show that $\Delta \rho_{x^a_t}|u^d_{PP} \geq \frac{1}{2} $.
\end{enumerate}
\vspace{5pt}

\begin{figure}[h]
    \centering
    \includegraphics[width=0.8\linewidth]{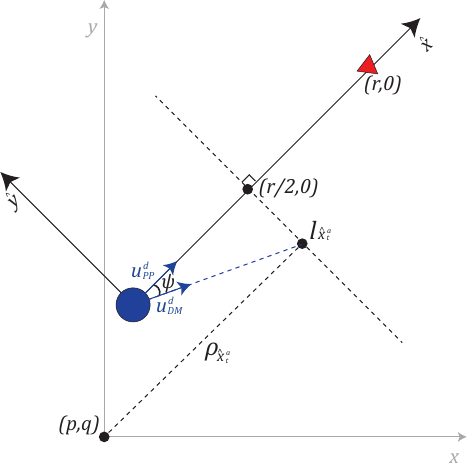}
    \caption{Visualization of notations and notions used in Theorem~\ref{theorem:main2}}
    \label{fig:Figure3}
\end{figure}

\textbf{1) Coordinate Transformation}
At time $t$ given $x^a_t$ and $x^d_t$, we do a \textit{rigid} coordinate transformation $\Phi: x \rightarrow \hat{x}$ such that $x^d_t \rightarrow \hat{x}^d_t = [0,0]^\top$, $x^a_t \rightarrow \hat{x}^a_t = [r,0]^\top$ where $r = \|e (t)\|$ and the center of the safe zone $\Omega_S$ will correspondingly be transformed to $ [p,q]^\top$. The assumption $\|x^a_t\| > \|x^d_t\|$ is translated to
\begin{equation}
    p < \frac{r}{2}
\end{equation}
in the transformed coordinate.

Rigid transformation only allows rotation and is followed by translation, and therefore preserves the Euclidean distance between every pair of points. In this transformed coordinate, the two strategies are simplified as $u^d_{PP} = [1,0]^\top$, and $u^d_{DM} = [\cos \psi,\sin\psi]^\top$. In other words, $x^d_{t+1} = [1,0]^\top$ for PP strategy, and $x^d_{t+1} = [\cos\psi, \sin \psi]^\top$ for DM strategy. 

Now $\rho_{\hat{x}^a_t}$ lies precisely on the perpendicular bisector of the line segment $\overline{\hat{x}^a_t \hat{x}^d_t}$. Consequently, $\angle \rho_{\hat{x}^a_t}\hat{x}^d_t \hat{x}^a_t = \angle \rho_{\hat{x}^a_t}\hat{x}^a_t \hat{x}^d_t = \psi$ where $\psi \in  (-\frac{\pi}{2}, \frac{\pi}{2})$. Formally, following holds:
\begin{equation}\label{p}
     p = \frac{r}{2}-\rho_{\hat{x}^a_t} < \frac{r}{2}
\end{equation}
\begin{equation}\label{q_psi}
    q = \frac{r}{2}\tan \psi.
\end{equation}

\textbf{2) Change in Defense Margin for PP strategy }

From \eqref{p} we have $\rho_{\hat{x}^a_t} = \frac{r}{2} - p$. For $x^d_{t+1} = [1,0]^\top$, we obtain $\rho_{\hat{x}^a_{t+1}} = \frac{r+1}{2} - p$. Consequently,
\begin{equation} \label{lambda0}
\begin{aligned}
    \Delta \rho_{\hat{x}^a_t}|u^d_{PP} &= \rho_{\hat{x}^a_{t+1}}-\rho_{\hat{x}^a_t} \\
    &=  (\frac{r+1}{2}-p) -  (\frac{r}{2}-p) = \frac{1}{2}.
\end{aligned}
\end{equation}

\textbf{3) Change in Defense Margin - Our strategy }

Similar to \eqref{lambda0}, to obtain $\rho_{\hat{x}^a_{t+1}}$ we will need to find a distance from $[p,q]^\top$ to the straight line that bisects $\hat{x}^d_{t+1} = [\cos \psi, \sin \psi]^\top$ and $[r,0]^\top$ or equivalently, 
\begin{equation}\label{bisecting_line}
    \frac{r-\cos \psi}{\sin \psi} (x-\frac{r+\cos \psi}{2}) -  (y-\frac{\sin \psi}{2}) = 0.
\end{equation}
The distance from $[p,q]^\top$ to \eqref{bisecting_line} can be obtained with the following equation:
\begin{equation}\label{rholambda1}
    \rho_{\hat{x}^a_{t+1}}|u^d_{DM} = \frac{|r^2-1+2q\sin \psi -2p (r-\cos\psi)|}{2\sqrt{ (r-\cos\psi)^2 + \sin ^2 \psi}}.
\end{equation}

Here we will find the lower bound of $\rho_{\hat{x}^a_{t+1}}|u^d_{DM}$ and assert that it is greater or equal to $\frac{1}{2}$ to complete our proof.

First observe that we only need to consider for $\psi \in [0,\frac{\pi}{2})$ since $\rho_{\hat{x}^a_{t+1}}$ is symmetrical with respect to $\hat{x}$-axis. Every result we obtain can therefore be identically proved for $\psi \in  (-\frac{\pi}{2}, 0]$. Subsequently, $q\geq 0$ by \eqref{q_psi}. Note that $q=0$ or $\psi = 0$ is a trivial case which yields $u^d_{PP} \equiv u^d_{DM}$. This obviously makes $\Delta\rho_{x^a_t}|u^d_{PP} \equiv \Delta \rho_{x^a_t}|u^d_{DM} = \frac{1}{2}$. 

The numerator of \eqref{rholambda1} can be lower-bounded by the following:
\begin{equation}
\begin{aligned}
    r^2-1 + &2q\sin \psi -2p (r-\cos \psi) \\
    &> r^2-1 + 2q\sin \psi -2p (r-\cos \psi) |{p=\frac{r}{2}}\\
    &= -1 + 2q\sin \psi +r\cos \psi \\
    &= -1 + \frac{r}{\cos \psi} > 0 \quad \forall \psi \in [0,\frac{\pi}{2}).
\end{aligned}
\end{equation}
Here we used $\psi \in [0, \frac{\pi}{2})$, $r>\sqrt{2}$, $q = \tan \psi$, and $p< \frac{r}{2}$. Hence, we can ignore the absolute operator for the remainder of the proof.

Rewriting \eqref{rholambda1} yields
\begin{equation}\label{eq:2-1}
\begin{aligned}
    \rho_{\hat{x}^a_{t+1}}|u^d_{DM} &= \frac{|r^2-1+2q\sin \psi -2p (r-\cos\psi)|}{2\sqrt{ (r-\cos\psi)^2 + \sin ^2 \psi}}\\
    &> \frac{r^2-1 + 2q\sin \psi -r (r-\cos \psi)}{2\sqrt{ (r-\cos\psi)^2 + \sin ^2 \psi}}\\
    &= \frac{r-\cos\psi}{2\cos \psi \sqrt{ (r-\cos\psi)^2 + \sin^2 \psi}}.
\end{aligned}
\end{equation}
Here last equality again used $p<\frac{r}{2}$ and $q = \frac{r}{2}\tan \psi$.

Furthermore, 
\begin{equation}
    \frac{r-\cos\psi}{2\cos \psi \sqrt{ (r-\cos\psi)^2 + \sin^2 \psi}} \geq \frac{1}{2}
\end{equation}
holds whenever $r>\sqrt{2}$. Note that equality holds only when $\psi = 0$. Finally, Due to the rigidity of the transformation $\Phi  (x,y)$, we can obtain $\rho_{x^a_{t+1}}|u^d_{DM}\geq \rho_{x^a_{t+1}}|u^d_{PP}$ directly from $\rho_{\hat{x}^a_{t+1}}|u^d_{DM} \geq \rho_{\hat{x}^a_{t+1}}|u^d_{PP}$, completing the proof.

\end{proof}
\begin{remark}
Unlike the PP strategy, goal of which is to capture the attacker, the DM strategy has different goal to maximize the defense margin. Therefore, it is inadequate to conduct stability analysis for the DM strategy as in Theorem~\ref{theorem:main1}.
\oprocend
\end{remark}

This theorem asserts that the DM strategy takes the safe zone into account and therefore it can be used to complement the PP strategy. AN empirical extension of Theorem~\ref{theorem:main2} is discussed in Section~\ref{sec:Simulation}.

Although the DM strategy is expected to be better than the PP strategy in terms of defense margin, the defense margin is not guaranteed to be non-decreasing. Together with the fact that the DM strategy does not explicitly steer to intercept the defender, solely relying on the DM strategy can lead to a shrinking defense margin without an interception. 

\subsection{Combination: Adjusted Defense Margin Strategy}\label{sec:TV}
We have discussed the limitations of using the PP and the DM strategy in Sections~\ref{sec:PP} and \ref{sec:met} respectively. In this subsection, we introduce a parameterized combination of the two to better suit our problem \eqref{condition1}. 

We introduce a weight parameter $\lambda_t$ and define the \textit{Adjusted Defense Margin  (ADM) strategy} as follows:
\begin{equation}\label{combination}
    u^{d}_t = c\lambda_t \frac{y-x^{d}_t}{\|y-x^{d}_t\|} + c (1-\lambda_t) \frac{l_{\mu}-x^{d}_t}{\|l_{\mu}-x^{d}_t\|}
\end{equation}
where $c$ is a normalizing constant that makes $\|u^d_t\| = 1$. Note that $\lambda_t \equiv 1$, and $\lambda_t \equiv 0$ restores the PP and the DM strategy.

The idea of the ADM strategy is to follow the DM strategy until it gets to a favorable position to apply the PP strategy. The problem is to construct adequate $\lambda_t$ that effectively balances DM and PP strategy in the evolving dynamics of the mission.

We first define the reliability of an observation $y_t$ which we can utilize without precise knowledge of $x^a_t$.

\begin{definition}
Let $\hat{w}_t$ be an estimate of the uncertainty $w_t$, expressed as
\begin{equation}
    \hat{w}_t \sim \mathcal{N} (0, \beta (\|y_t-x^d_t\|^2 I_2).
\end{equation}
The reliability of the observation $y_t$ is denoted as $P_t$ and is expressed as follows:
\begin{equation} \label{eq:Pt}
    \begin{aligned} 
        P_t = F (k, k) &+ F (-k,-k) \\
        &-F (-k,k)-F (k,-k).
    \end{aligned}
\end{equation}
Here $F (\cdot,\cdot): \mathbb{R}\times\mathbb{R} \rightarrow \mathbb{R}$ is the cumulative distribution function  (CDF) of a multivariate Gaussian distribution $\hat{w}_t$, and $k$ is the characterizing length of the reliability square. 
\end{definition}

In the above definition, we are creating a square of length $2k$, the center of which is the observation $y_t$. Then we obtain a probability of the target being inside the square by integrating multivariate Gaussian distribution within the square while the variance of the distribution is $\beta\|y_t-x^d_t\|^2 I_2$. Moreover, we are trying to make the control decision based on current observation $y_t$, which is known as the \textit{certainty equivalence approach}\cite{mania2019certainty}. 

\begin{remark}
One might argue that it would be more mathematically rigorous to use integration over multivariate Gaussian distribution within a fixed radius instead of a square. However, \eqref{eq:Pt} is known to be a good approximation\cite{tanash2021improved, lopez2011versatile} and the computation can be done in time complexity of $O (1)$, which is a substantial advantage in real-time missions.
\end{remark}

Utilizing the reliability of an observation, we propose the following parameterization:
\begin{equation}\label{lambda_t}
\begin{aligned}
    \lambda_t = P_t.
\end{aligned}
\end{equation}

Rewriting \eqref{combination}, we obtain
\begin{equation}\label{TV_control}
    u^{d}_t = cP_t \frac{y-x^{d}_t}{\|y-x^{d}_t\|} + c (1-P_t) \frac{l_{\mu}-x^{d}_t}{\|l_{\mu}-x^{d}_t\|}.
\end{equation}

Consider a case without observational noise, i.e. $\beta = 0$. This automatically makes $P_t \equiv 1$, reducing \eqref{TV_control} to \eqref{pp_control}. In other words, if we have accurate information about the attacker at all times, we follow the PP strategy. When the uncertainty is large, we tend more to the DM strategy. 

\section{Simulation and Results}\label{sec:Simulation}
In this section, we present a performance comparison of the strategies discussed in this paper. To this end, we first describe the parameters and different types of attacker behaviors used in the simulation. Then we provide and explain the simulation results. To obtain further insight, we introduce a function approximator of $\Delta \rho_{x^a_t}$ using Neural Network (NN) to extend the result of the Theorem~\ref{theorem:main2}. 

Simulations of the defenders using the PP strategy \eqref{pp_control}, the DM strategy \eqref{new_control}, and the ADM strategy \eqref{TV_control} were tested against three different behaviors of attackers: \textbf{1) Linear, 2) Spiral}, and \textbf{3) Intelligent}. Linear and spiral behaviors are predefined controls, and an intelligent attacker behaves in reaction to the defender. Linear strategy is a strategy that simply steers to the origin, or $u^a_{Linear} = -\frac{x^a}{\|x^a\|}$. Specific details and algorithms for the other two behaviors are explained in the Algorithm~\ref{alg:attacker}. Figure \ref{fig:trajectory} gives an intuition of how attacker behaviors are designed.

\begin{algorithm}
\caption{Attacker behaviors}\label{alg:attacker}
\begin{algorithmic}
\State \textbf{Spiral behavior}
    \State $x,y \gets x^a$
    \State $r \gets \|x^a\|$ 
    \State $\phi \gets \arctan y/x$
    \State $d\phi \gets 1/r$ 
    \State $\phi \gets \phi - d\phi$
    \State $x_2 \gets  (r-1)\cos d\phi - x$
    \State $y_2 \gets  (r-1)\sin d\phi - y$
    \State $u^a \gets [x_2-x, y_2-y]/\|[x_2-x, y_2-y]\|$ 
    \State $x^a \gets x^a+u^a$\\
\State \textbf{Intelligent behavior}
    \State $w \sim \mathcal{N} (0,\beta \|e_t\|^2 I_2)$
    \State $\hat{x}^d \gets x^d +w$ \Comment{Noisy observation} 
    \State $d_1 \gets - (\hat{x}^d-x^a)/\|\hat{x}^d-x^a\|$ \Comment{Evade defender}
    \State $d_2 \gets -x^a /\|x^a\|$ \Comment{Steer to the origin}
    \State $k_1 \gets 1/\|\hat{x}^d-x^a\|$
    \State $k_2 \gets 1$
    \State $x^a \gets x^a +  (k_1 d_1 + k_2 d_2)/\|k_1 d_1 + k_2 d_2\|$
\end{algorithmic}
\end{algorithm}

\begin{table}[ht]
\centering
\caption{Parameters used in Simulation}\label{table:parameters}
\begin{tabularx}{0.48\textwidth} { 
  >{\setlength\hsize{0.35\hsize}\centering\arraybackslash}X 
  | >{\setlength\hsize{1.5\hsize}\raggedright\arraybackslash}X
  | >{\setlength\hsize{0.8\hsize}\centering\arraybackslash}X}
\hline
\hline
Notation & Description & Value  \\ 
\hline
$t_f$ & Terminal time & $\infty$\\
$R_{\Omega_I}$ & Radius of the region of interest & 50  \\
$R_{\Omega_S}$ & Radius of the safe zone & 10 \\
$\tau$ & Maximum range of interception & 2\\
$\Lambda^d$ & Distribution of a defender's initial position $x^d_{t_i}$ in polar coordinate system & $  (\mathcal{U}[0,20], \mathcal{U}[-\pi,\pi])$\\
$\Lambda^a$ & Distribution of an attacker's initial position $x^a_{t_i}$ in polar coordinate system& $ (\mathcal{U}[45,50], \mathcal{U}[-\pi,\pi])$\\
$\beta$ & Coefficient for variance of uncertainty & 0.05 \\
$k$ & Characterizing length of a square in \eqref{eq:Pt} & 0.5 \\
\hline
\hline
\end{tabularx}
\end{table}

Table~\ref{table:parameters} summarizes the parameters used in the simulation. In the table, $\Lambda^a$ and $\Lambda^d$ were introduced to randomly initialize the agents, and $\mathcal{U}$ denotes a uniform distribution. All the values are normalized, and therefore they are dimensionless.

\begin{figure}[ht]
    \centering
    \includegraphics[width = \linewidth]{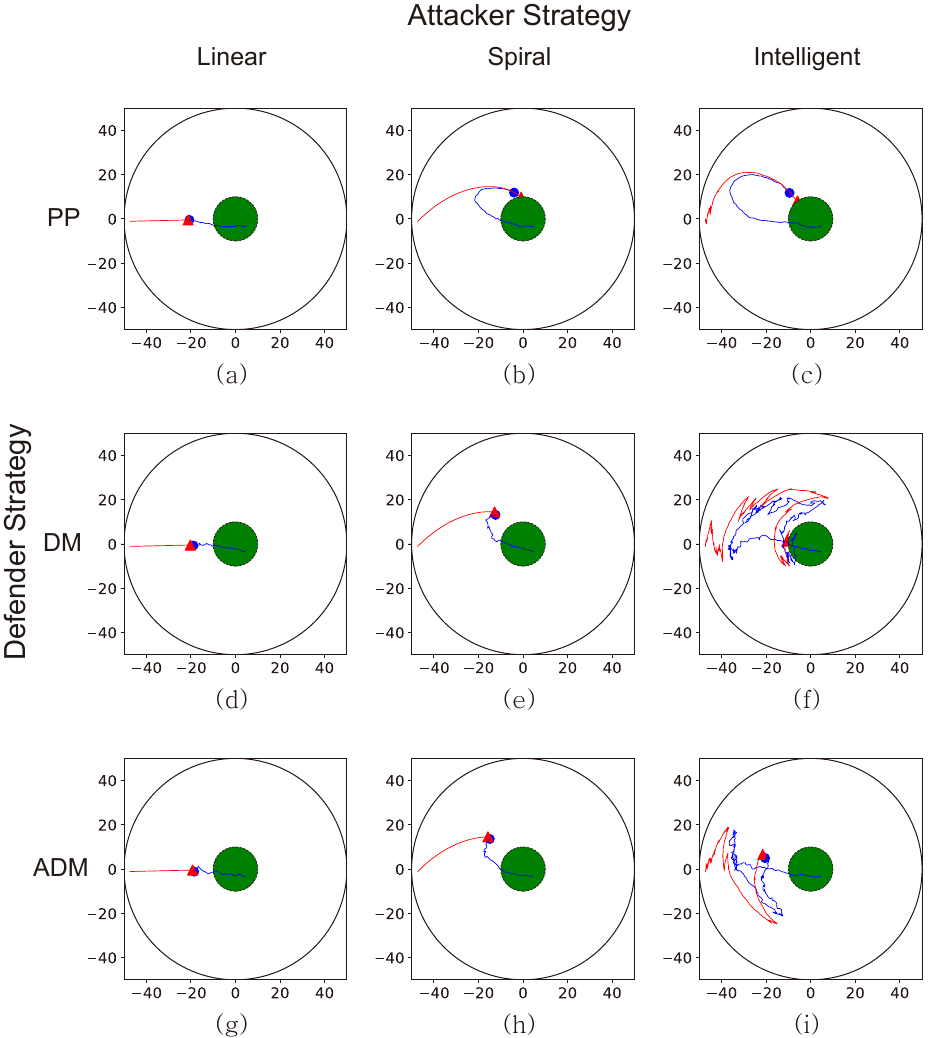}
    \caption{Trajectories of an attacker and a defender under various behavior scenarios. The outer circle represents $\Omega_I$ and the green circle represents $\Omega_S$. Red and blue lines respectively denote the attacker $x^a_t$ and the defender $x^d_t$ trajectories over time. All cases in this figure have homogeneous initial positions of an attacker and a defender $ (x^d_{t_i},x^a_{t_i})$. A defender has successfully defended the safe zone in all cases but (b), (c), and (f).}
    \label{fig:trajectory}
\end{figure}

Figure~\ref{fig:trajectory} illustrates a trajectory history of nine different case scenarios obtained by three defender strategies and three attacker strategies. The result shows that the PP guidance law fails to defend the safe zone against spiral (b) and intelligent (c) attacker behavior. The DM strategy failed against an intelligent attacker (f), and finally, our proposed ADM strategy has successfully defended all types of attackers. 

Figure~\ref{fig:winpercent_barchart} shows a defense performance obtained by running seeded 1,000 trials for each scenario. 
\begin{figure}[h]
    \centering
    \includegraphics[width =0.9\linewidth]{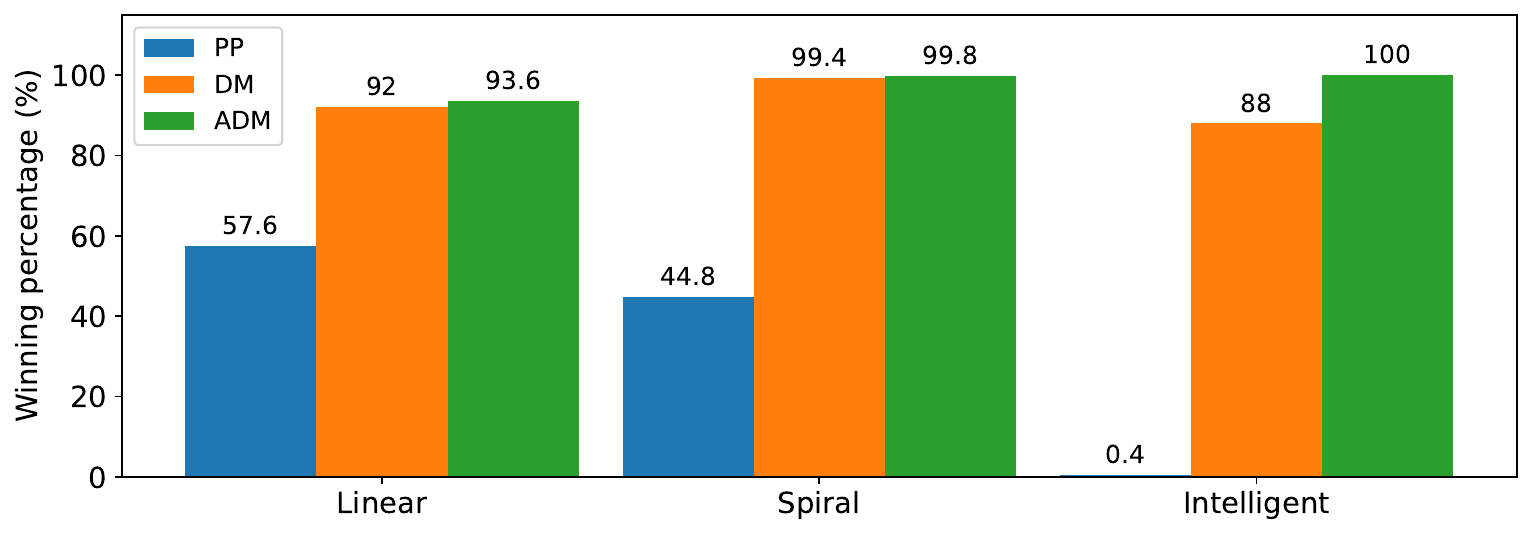}
    \caption{Winning percentage of each strategy against different attacker strategies. Each percentage result is obtained by counting successful missions out of 1,000 randomly initialized scenarios.}
    \label{fig:winpercent_barchart}
    \vspace{-0.2cm}
\end{figure}

In Figure~\ref{fig:winpercent_barchart}, we can see that the performance of the PP strategy drops as the attacker behavior complexity increases, whereas the DM strategy peaked its performance against the Spiral attacker behavior. Lastly, the ADM strategy performed better with higher complexity of attacker behavior.

The performance drop of the PP strategy can be explained with Theorem~\ref{theorem:main1}. For instance, Figures~\ref{fig:trajectory} (b) and (c) show that once $e_t^\top u^a_t \simeq 1$, the PP strategy becomes helpless. 

The DM strategy is designed to complement such limitations of the PP strategy. As a result, the DM achieves higher performance than the PP strategy in all scenarios. However, we can see its limitation in Figure~\ref{fig:trajectory} (f), in which its passivity culminated in mission failure. Due to such cases, the performance of the DM strategy drops against an intelligent attacker. 

The ADM strategy combines the two strategies and successfully improves performance. Some of the failure cases in a non-intelligent attacker can be explained by the randomness of the initial conditions. Some bipolar initial conditions of an attacker and a defender guaranteed the attacker to win the mission regardless of a defender's strategy. 

Next, we aim to extend Theorem~\ref{theorem:main2} via numerical simulation. Theorem~\ref{theorem:main2} assumed absence of uncertainty $w_t$ for simplicity. Here we use a Neural Network to approximate for $\Delta \rho_{x^{a}_t}$ and analyze results without such assumption.

\begin{definition}\label{def:robust1}
Assume $u^a_t = -\frac{x^a_t}{\|x^a_t\|}$. A strategy $u^d_A$ is said to be safer than strategy $u^d_B$ with respect to uncertainty if
\begin{equation}\label{safe_probability}
    \Delta (\rho_{x^{a}_t})|_{u^d_A} \geq \Delta (\rho_{x^{a}_t})|_{u^d_B}
\end{equation}
where $u^d_A$ and $u^d_B$ rely on uncertain observation of $x^a_t$.
\end{definition}

To compare $\Delta \rho_{x^a_t}|u^d_{PP}$ and $\Delta \rho_{x^a_t}|u^d_{DM}$ in the presence of uncertainty, we trained two fully-connected neural networks $\Delta\rho_{\hat{x}^a_t}|u^d_{PP}$ and $\Delta\rho_{\hat{x}^a_t}|u^d_{DM}$, both of which take $ (x^a_t,x^d_t)$ as an argument and respectively return a prediction of $\Delta \rho_{x^a_t}|u^d_{PP}$ and $\Delta \rho_{x^a_t}|u^d_{DM}$. The training data $\{(x^a_i,x^d_i),\Delta \rho_{x^a_i}\}_i^{N}$ is collected while simulating to obtain Figure~\ref{fig:winpercent_barchart} against a linear attacker behavior. Here $N$ is the total number of data collected. FCNN has 2 hidden layers of 100 nodes and a learning rate of 0.001, and we used an Adam optimizer for our work. To test the model, we randomly generated test data samples $ (x^a_i,x^d_i)_i^{M}$, where $M$ is the number of test data samples. Here $x^a_i$ and $x^d_i$ are uniformly sampled from a circle of random radius $\|x^a_t\| \sim \Lambda^a = \mathcal{U}[25,40]$ and $\|x^d_t\| \sim \Lambda^d =\mathcal{U}[0,15]$, respectively. In this work we used $M=100,000$.

The following table shows the result of the test:
\begin{center}
\begin{tabular}{ c|c|c } 

 & PP & DM \\
\hline
$\Delta \rho_{\hat{x}^a_t}$ & -0.132 & -0.028 \\ 

\end{tabular}
\end{center}

This again shows that $\Delta \rho_{x^a_t}|u^d_{DM}\geq \Delta \rho_{x^a_t}|u^d_{PP}$ even in the presence of uncertainty $w_t$.

\section{Conclusion}\label{Conclusion}
This work introduces a new metric called defense margin to solve the problem of a protective mission in which observation of the rogue attacker is noisy. We provided analytical proof to justify the implementation of the control strategy based on the defense margin. Finally, empirical results validate the efficacy of the strategy. 

Future research avenues shall include methods to tune optimal parameters according to the sensitivity of each parameter concerning the defense performance. Extension to a multi-agent problem, adding obstacles in the environment, and implementation in a 3D environment can also be part of future work. Lastly, various attacker behaviors could be designed and implemented for more comprehensive and reliable simulation results. 

\bibliographystyle{IEEEtran} 
\bibliography{refs} 
\end{document}